\newtheorem{theorem}{Theorem}
\newtheorem{remark}{Remark}
\newtheorem{definition}[theorem]{Definition}
\newtheorem{property}[theorem]{Proposition}
\newtheorem{notation}[theorem]{Notation}
\begin{document}

\title{Lyapunov exponents of a class of piecewise continuous systems of fractional order}

\author{MARIUS-F. DANCA\\Department of Mathematics and Computer Science, Avram Iancu University, \\Str. Ilie Macelaru, nr. 1A, 400380 Cluj-Napoca, Romania,\\and\\Romanian Institute of Science and Technology, \\Str. Ciresilor nr. 29, 400487 Cluj-Napoca, Romania}

\maketitle

\begin{abstract}
In this paper, we prove that a class of autonomous piecewise continuous systems of fractional order has well-defined Lyapunov exponents. For this purpose, based on some known results from differential inclusions of integer and fractional order and differential equations with discontinuous right-hand side, the associated discontinuous initial value problem is approximated with a continuous one of fractional order. Then, the Lyapunov exponents are numerically determined using, for example, the known Wolf's algorithm. Three examples of piecewise continuous chaotic systems of fractional order are simulated and analyzed: Sprott's system, Chen's system and Simizu-Morioka's system.
\end{abstract}

\emph{Keywords: }Piecewise continuous function \and Fractional-order system \and Piecewise continuous system of fractional order \and Lyapunov exponent

\section{Introduction}

\label{sec1}

Nowadays discontinuous systems of fractional order represent a novel topic of broad interest since they provide a logical link between the fractional derivative approach to descriptive systems and physical system properties, such as dry friction, forced vibration, brake processes with locking phases, as well as stick and slip phenomena.

However, to the best of our knowledge, there are yet very few works and results on discontinuous systems of fractional order. Also, most dedicated numerical methods for differential equations of fractional order can be used to "integrate" abruptly discontinuous equations of fractional order systems, but without mathematical justification (discontinuous equations may not even have classical solutions).

In this context, defining and calculating Lyapunov exponents (LEs) of systems modeled by fractional-order differential equations (FDEs) with discontinuous righthand side, represent a real challenge (see e.g. the \cite{kun} pp. 237--231, \cite{gran,gan} and \cite{li,wei,capo,ngu} and the references there, on calculating LEs in PWC and FDE systems respectively).

In this paper, the existence of LEs of piecewise continuous systems of fractional order is proved.

The systems are modeled by the following Caputo-type autonomous piece wise continuous (PWC) Initial Value Problem (IVP)
\begin{equation}\label{ivp}
D_*^qx=f(x):=g(x)+A(x)s(x),~~~x(0)=x_0,~~~ t\in[0,T],
\end{equation}

\noindent where $T>0$, $g:\mathbb{R}^n\rightarrow \mathbb{R}^n$ is a nonlinear, at least continuous, function, $s:\mathbb{R}^n\rightarrow \mathbb{R}^n$, $s(x)=(s_1(x_1),s_2(x_2),...,s_n(x_n))^T$ a piece-wise function, with $s_i:\mathbb{R}\rightarrow \mathbb{R}$, $i=1,2,...,n$, piece-wise constant functions ($sign$ or Heaviside functions in many applications), $A\in \mathbb{R}^{n\times n}$ a square matrix of real functions, and $D_*^q$, with $q$ being some positive real number, stands for the Caputo fractional derivative.

The discontinuity of $f$ is assured if the following assumption is considered:

\vspace{3mm}
\noindent \textbf{(H1)} At least one element of $A(x)s(x)$ is discontinuous.
\vspace{3mm}

As in most of practical examples, $s_i(x_i)=sign(x_i)$. For example, for the usual case $n=3$ and $s(x)=(sign(x_1),sign(x_2),sign(x_3))^T,~$ let us consider the fractional-order variant of PWC Sprott's system \cite{sprotus,sprot2}
\begin{equation}\label{spr}
\begin{array}{l}
D_{\ast }^{q_1}x_{1}=x_{2}, \\
D_{\ast }^{q_2}x_{2}=x_{3}, \\
D_{\ast }^{q_3}x_{3}=-x_{1}-x_{2}-ax_{3}+bsign(x_{1}),%
\end{array}%
\end{equation}

\noindent with $a=0.5$, $b=1$, where
\[
g(x)=\left(
\begin{array}{c}
x_{2} \\
x_{3} \\
-x_{1}-x_{2}-ax_{3}%
\end{array}%
\right) ,~~~A=\left(
\begin{array}{ccc}
0 & 0 & 0 \\
0 & 0 & 0 \\
0 & 0 & b%
\end{array}%
\right).
\]

Replacing the PWC functions, such as $sign$ or Heaviside function, with continuous functions, represents an usual setting in many works (see e.g. \cite{wir}). Using some known results of differential inclusions, following the way presented in \cite{danda} we show why and how this approximation can be done.

In this paper we prove that systems modeled by the IVP (\ref{ivp}) can be approximated with continuous systems of fractional order, for which the variational equations defining LEs are well defined.

The paper is organized as follows: Section I presents the notions and results utilized in this paper, Section II presents the way in which the IVP (\ref{ivp}) can be continuously approximated and Section III deals with the variational equations which define LEs. In Section IV the LEs for three examples of PWC systems of fractional order are determined. The Conclusion Section ends this paper.

\section{Preliminaries}

\begin{notation}
Denote by $\mathcal{M}$ the null \emph{discontinuity set} of $f$, generated by the discontinuity points of $s_i$.
\end{notation}
\noindent $\mathcal{M}$ has zero Lebesgue measure, $\mu(\mathcal{M})=0$, and divides $\mathbb{R}^n$ to several $m>1$ open disjunct and connected sub-domains $\mathcal{D}_i\subset \mathbb{R}^n$, $i=1,2,...,m$, such that $\mathbb{R}^n=\bigcup_{i=1}^m \overline{\mathcal{D}_i}$. The discontinuity points belong to the union of the boundaries of $\mathcal{D}_i$, i.e. $\mathcal{M}=\bigcup_{i=1}^m bndry(\mathcal{D}_i)$ (see e.g. \cite{cort} for a tutorial on discontinuous dynamical systems).

\begin{definition} A function $f : \mathbb{R}^n\rightarrow \mathbb{R}^n$, is called\emph{ piece-wise continuous } if it is continuous throughout $R^n\setminus  \mathcal{M}$ and at $\mathcal{M}$ has finite (possible different) limits.
\end{definition}
Under Assumption \textbf{H1}, $f$, defined by (\ref{ivp}), is PWC.

\noindent The following assumption on $g$ will be considered:

\vspace{3mm}
\noindent (\textbf{H2}) $g$ is differentiable on $\mathbb{R}^n$.
\vspace{3mm}

Because the PWC functions, $s_i$, are linear on $\mathcal{D}_k$, $k=1,2,...,m$, they are differentiable on $\mathcal{D}_k$. Therefore, the following property holds

\begin{property}
$f$ is PWC on $\mathbb{R}^n$ and differentiable on $\mathcal{D}_k$, $k=1,2,...,m$.
\end{property}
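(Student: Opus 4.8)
The plan is to verify the two claims of the Proposition in turn, namely that $f$ is piecewise continuous on $\mathbb{R}^n$ and that $f$ is differentiable on each sub-domain $\mathcal{D}_k$. The first claim has essentially already been recorded in the remark following the definition (``Under Assumption \textbf{H1}, $f$, defined by (\ref{ivp}), is PWC''), so I would simply reference it: by \textbf{(H1)} at least one component of $A(x)s(x)$ is discontinuous, and the discontinuities are precisely those of the $s_i$, which are piecewise constant with finite one-sided limits; hence $f=g+A(x)s(x)$ is continuous off the discontinuity set $\mathcal{M}$ and possesses finite (possibly distinct) limits on $\mathcal{M}$, matching the Definition.

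For the differentiability claim I would argue componentwise and decompose $f = g + A(x)s(x)$. By assumption \textbf{(H2)}, $g$ is differentiable on all of $\mathbb{R}^n$, in particular on each $\mathcal{D}_k$. It therefore suffices to show that the product term $A(x)s(x)$ is differentiable on each $\mathcal{D}_k$. First I would observe that each $s_i$, being piecewise constant with jumps only at the discontinuity points that generate $\mathcal{M}$, is constant on each connected sub-domain $\mathcal{D}_k$ (since $\mathcal{D}_k$ is connected and disjoint from $\mathcal{M}$); a constant function is trivially differentiable there. (The excerpt's phrase ``linear on $\mathcal{D}_k$'' is the relevant hypothesis here, and constancy is the special case; either way the $s_i$ are smooth on $\mathcal{D}_k$.) Combined with the standing assumption that the entries of $A$ are themselves real functions that are differentiable on each $\mathcal{D}_k$, the matrix product $A(x)s(x)$ is a finite sum of products of differentiable functions, hence differentiable on $\mathcal{D}_k$ by the product and sum rules.

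The key step, and the only point requiring care, is the claim that differentiability inside each $\mathcal{D}_k$ is genuinely local and unaffected by the discontinuities of $f$: since each $\mathcal{D}_k$ is open and disjoint from $\mathcal{M}$, every point $x\in\mathcal{D}_k$ has a neighborhood contained in $\mathcal{D}_k$ on which $s$ agrees with a fixed smooth (constant or linear) branch, so the jump behavior at $\mathcal{M}$ never enters the local computation of the derivative. I expect this to be the main obstacle only in the bookkeeping sense—one must be explicit that openness of $\mathcal{D}_k$ is what licenses treating $f$ as a smooth function near each interior point. Concluding, I would combine the sum rule (for $g + A(x)s(x)$) with the product rule (for $A(x)s(x)$) to state that $f$ is differentiable on $\mathcal{D}_k$ for every $k=1,2,\dots,m$, which together with the piecewise continuity already established yields the Proposition.
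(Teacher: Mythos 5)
Your proposal is correct and follows essentially the same route as the paper, which disposes of the Proposition in a single sentence: piecewise continuity comes from \textbf{(H1)}, and differentiability on each $\mathcal{D}_k$ comes from \textbf{(H2)} together with the fact that the $s_i$ are constant (the paper says ``linear'') on each sub-domain. Your only addition is to make explicit the tacit assumption that the entries of $A$ are differentiable on each $\mathcal{D}_k$, which the paper never states but clearly relies on.
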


\noindent The differentiability of $g$ is required for LE. \footnote{Actually, in the great majority of known examples, with $g$ being polynomial, it is also a smooth function.}

\noindent For example, for the PWL function $f:\mathbb{R}\rightarrow \mathbb{R}$ defined by
\begin{equation}\label{exemplu}
f(x)=2-3sgn(x),
\end{equation}

\noindent the set $\mathcal{M}=\{0\}$ determines the continuity (and also differentiability) sub-domains $\mathcal{D}_1=(-\infty, 0)$, $\mathcal{D}_2=(0,\infty)$ (see the graph in Fig. \ref{fig2} a).

\begin{definition}
Let $x:[0,T]\rightarrow \mathbb{R}$ and $q>0$. The \emph{Caputo fractional derivative with starting point }$0$, introduced by M. Caputo in 1967 in \cite{cap}, is defined as
\begin{equation}\label{cap}
D_*^qx(t)=\frac{1}{\Gamma(n-q)}\int_0^t (t-\tau)^{n-q-1}x^{(n)}(\tau)d\tau.
\end{equation}
\end{definition}

\noindent The fractional order, $q\in(n-1,n)$ with $n\in \mathbb{Z}^+$, being the nearest integer bigger than $q$ ($n=\lceil q\rceil$). $\Gamma$ is Euler's Gamma function, a generalization of the factorial function $n!$, i.e. $\Gamma(n)=(n-1)!$, $n\in \mathbb{N}$, defined as

\begin{equation*}
\Gamma(z)=\int_0^t t^{z-1}e^{-t}dt, ~~~ z\in\mathbb{C}, ~~Re(z)>0.
\end{equation*}

Compared to other fractional-order differential operators, $D_*^q$ is more restrictive since it requires the $n$-th (\emph{first}, for $q<1$) derivative of $f$ (see e.g. \cite{old,pod,nico}). However, the Caputo derivative with starting point $0$ has the great advantage that it uses the fractional order initial conditions. Therefore, the use of Caputo's derivative in the IVP (\ref{ivp}) is fully justified because in practical (physical) problems, we need physically interpretable initial conditions (see e.g. \cite{nico,pod5}). Thus, in (\ref{ivp}), the initial condition(s) can be used as the integer-order differential equation counterpart, which, for the common case of $q\in(0,1)$, will reduce to $x(0)=x_0$.

If we consider the FDE associated with (\ref{exemplu})
\begin{equation}\label{exemplu2}
D_*^qx=2-3sign(x),~~~x(0)=x_0,
\end{equation}

\noindent then there are no classical (continuously differentiable) solutions starting from some point $x_0$. Thus, for $x=x_0 = 0$, there is no solution ($D_*^q(0)=0\neq2=2-3sign(0)$). For $x_0 > 0$, there exists a solution but only on the interval $[0, T')$ with $T' = (\Gamma(1+q) x_0 )^{1/q}$. This solution has the form $x(t) = x_0 -t^q/\Gamma(1+q)$, and it cannot be extended to any interval larger than $[0, T')$. For $x_0 < 0$, there also exists some $T'' > 0$, $T''={(\Gamma(1+q)x_0/5)}^{1/q}$,  such that the solution, $x(t) = x_0 +5 t^q/\Gamma(1+q)$, exists but only on $[0, T'')$. Even these solutions tend to the line $x=0$, they cannot extend along this line (see Fig. \ref{fig3} a, where $q=0.6$ and $q=0.8$).

One way to overcome this difficulty is similar to the one for integer DEs with discontinuous right-hand side (Fillipo equations), namely to transform the discontinuous right-hand side into a convex set-valued function with closed values. In this way, the problem is restarted as a differential inclusion of fractional order.

\begin{definition} A \emph{set-valued} (\emph{multi-valued}) function $F:\mathbb{R}^n\rightrightarrows \mathbb{R}^n$ is a function which associates to any element $x\in\mathbb{R}^n$, a subset of $\mathbb{R}^n$, $F(x)$ (the image of $x$).
\end{definition}

There are several ways to define $F(x)$. The (convex) definition was introduced by Filippov in \cite{filip} (see also\cite{dai,aub1,aub2})
\begin{equation}\label{fill}
F(x)=\bigcap_{\varepsilon >0}\bigcap_{\mu(\mathcal{M})=0} \overline{conv}(f({z\in \mathbb{R}^n: |z-x|\leq\varepsilon}\backslash \mathcal{M})).
\end{equation}
Here, $F(x)$ is the closure of the convex hull of $f(x)$, with $\varepsilon$ being the radius of the ball centered at $x$. At any continuity point of $f$, $F(x)$ consists of one single point, which coincides with the value of $f$ at this point (i.e. we get back $f(x)$ as the right-hand side: $F(x)=\{f(x)\}$), while at the points belonging to $\mathcal{M}$, $F(x)$ is given by (\ref{fill}).

If $s_i$ are $sign$ functions, the underling set-valued form, denoted by $Sign:\mathbb{R}\rightrightarrows \mathbb{R}$, is defined as follows
\begin{equation}
Sign(x)=\left\{
\begin{array}{cc}
\{-1\}, & x<0, \\
\lbrack -1,1], & x=0, \\
\{+1\}, & x>0.%
\end{array}%
\right.
\end{equation}

Thus, $sign(0)$ is taken now as the whole interval $[-1 , 1]$ ''connecting'' the points $-1$ and $+1$.

In Fig. \ref{fig4} a and Fig. \ref{fig4} b there are plotted $sign$ and $Sign$, respectively, and in Fig. \ref{fig3} b the graph of the set-valued function corresponding to Example (\ref{exemplu}).

\noindent In this way, a discontinuous FDE of the form
\begin{equation}\label{disc}
D_*^qx=f(x),~~~ x(0)=x_0,~~~ t\in [0,T],
\end{equation}

\noindent can be restarted as a set-valued problem (fractional-order differential inclusion (FDI))
\begin{equation}\label{set_v}
D_*^qx\in F(x),~~~ x(0)=x_0, ~~~\text{for a.a.}~~~ t\in [0,T].
\end{equation}

Differential inclusions of integer-order have been intensively studied in the literature since the 1930s, one of the first works being attributed to the Polish mathematician Zaremba \cite{zar} (see also \cite{filip,dai,aub1,aub2}), while the study of fractional order differential inclusions was initiated by El-Sayed and Ibrahim in \cite{el}. Some of the (few) works on differential inclusions of fractional order are the papers \cite{jon,jon2,yong}.

Following the way proposed by Filippov for DE of integer-order with discontinuous right-hand side, a (generalized or Filippov) solution to (\ref{disc}) can be defined as follows (\cite{filip} p.85; see also \cite{waz}).

\begin{definition}  A \emph{generalized solution} to (\ref{disc}) is an absolutely continuous function $x:[0,T]\rightarrow\mathbb{R}$ satisfying (\ref{set_v}) for a.a. $t\in[0,T]$.
\end{definition}

Thus, by applying Filippov's regularization, the problem (\ref{exemplu}) is converted to the following set-valued IVP of fractional order
\begin{equation}\label{ex_set}
D_{\ast }^{q}(x)-2\in-3Sign(x)=\left\{
\begin{array}{lc}
5, & x<0, \\
\left[ -1,5\right] , & x=0,\\
-1, & x>0,%
\end{array}%
\right.
\end{equation}

\noindent for a.a. $t\in [0,T]$. Now, for $x_0=0$, one can choose for $D_*^q$ any value in $[-1,5]$, for example $0$ (Fig. \ref{fig2} b), and the equation at this point reads $D_*^qx=0$. This shows that $x(t)=0$, for $x_0=0$, is a solution which verifies now the equation. In this way, using (\ref{ex_set}), the FDE (\ref{exemplu2}) will have a generalized solution. Therefore, for whatever initial condition $x_0$, the solution will reach, and also continue forward, the line $x=0$ (dotted line in Fig. \ref{fig3} b).

Since the theory of numerical methods for FDI is only at the begining (\cite{rob} is one of the very few works on this subject), we propose a different approach to integrate numerically the underlying set-valued IVP: to approximate the set-valued problem with a single-value continuous one of fractional order, for which there are known numerical methods, such as the predictor-corrector Adams-Bashforth-Moulton (ABM) method \cite{kai1}.

\begin{definition} Let $F:\mathbb{R}^n\rightrightarrows \mathbb{R}^n$ be a set-valued function; the single-valued function $f:\mathbb{R}^n\rightarrow \mathbb{R}^n$ is called a \emph{selection} (\emph{approximation}) of $F$ if $f(x)\in F(x)$ for every $x\in \mathbb{R}^n$ (see e.g. \cite{cll}, \cite{kas}).
\end{definition}

\begin{definition} The map $F$ is \emph{upper semi-continuous} (USC) on $\mathbb{R}^n$ if for each $x_0\in\mathbb{R}^n$, the set $F(x_0)$ is a nonempty and closed subset of $\mathbb{R}^n$, and if for each open set $N$ of $\mathbb{R}^n$ containing $F(x_0)$, there exists an open neighborhood $M$ of $x_0$ such that $F(M)\subset N$.
\end{definition}

\begin{theorem}\emph{(\textbf{Cellina's Theorem} \label {th0}\cite{aub1,aub2})}. Let $F:\mathbb{R}^n\rightrightarrows\mathbb{R}^n$ be USC. If the values of $F$ are nonempty and convex, then for every $\varepsilon>0$, there exists a locally Lipschitz selection $f_{\varepsilon}:\mathbb{R}^n\rightarrow \mathbb{R}^n$ such that
\begin{equation*}
Graph(f_\varepsilon)\subset Graph(F)+ \varepsilon B,
\end{equation*}

\end{theorem}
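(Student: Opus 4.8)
The plan is to build $f_\varepsilon$ by gluing together constant values of $F$ through a locally Lipschitz partition of unity, using upper semi-continuity to keep those values near the graph and convexity to keep the resulting averages inside $F$ up to $\varepsilon$. First I would fix $\varepsilon>0$ and, for each $z\in\mathbb{R}^n$, observe that $N_z:=F(z)+\tfrac{\varepsilon}{2}B$ is an open set containing the nonempty closed set $F(z)$. The USC hypothesis then supplies a radius $\delta(z)>0$ with $F(B(z,\delta(z)))\subset N_z$, which I would additionally shrink to satisfy $\delta(z)\le\varepsilon$. The balls $\{B(z,\delta(z)/2)\}_{z\in\mathbb{R}^n}$ thus form an open cover of $\mathbb{R}^n$.

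Since $\mathbb{R}^n$ is a metric space it is paracompact and carries locally Lipschitz partitions of unity, so I would pass to a locally finite refinement indexed by centers $z_i$, together with nonnegative locally Lipschitz weights $\{p_i\}$ satisfying $\sum_i p_i\equiv 1$ and $\mathrm{supp}(p_i)\subset B(z_i,\delta(z_i)/2)$. Choosing any $y_i\in F(z_i)$ (legitimate because each value is nonempty), I define $f_\varepsilon(x):=\sum_i p_i(x)\,y_i$. Local finiteness makes this a locally finite sum of locally Lipschitz functions, so $f_\varepsilon$ is itself locally Lipschitz, which settles the regularity half of the assertion.

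For the graph inclusion I would fix $x$ and set $I(x)=\{i:p_i(x)>0\}$, a finite set with $x\in B(z_i,\delta(z_i)/2)$ for each $i\in I(x)$. Selecting $i_0\in I(x)$ that maximizes $\delta(z_{i_0})$, the triangle inequality yields $|z_i-z_{i_0}|<\delta(z_{i_0})$ and $|x-z_{i_0}|<\tfrac{\varepsilon}{2}$ for every $i\in I(x)$, whence $y_i\in F(z_i)\subset F(z_{i_0})+\tfrac{\varepsilon}{2}B$. Convexity now enters decisively: $F(z_{i_0})+\tfrac{\varepsilon}{2}B$ is convex, so the convex combination $f_\varepsilon(x)=\sum_{i\in I(x)}p_i(x)y_i$ belongs to it, producing some $w\in F(z_{i_0})$ with $|f_\varepsilon(x)-w|\le\tfrac{\varepsilon}{2}$. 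Combined with $|x-z_{i_0}|<\tfrac{\varepsilon}{2}$, the point $(x,f_\varepsilon(x))$ lies within $\varepsilon$ of $(z_{i_0},w)\in Graph(F)$, that is, $Graph(f_\varepsilon)\subset Graph(F)+\varepsilon B$.

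The main obstacle I anticipate is securing the two metric-space ingredients on which the gluing rests: the existence of a locally Lipschitz (not merely continuous) partition of unity subordinate to the cover, which is what upgrades the selection to local Lipschitz regularity, and the ``largest-$\delta$'' device that produces a single center $z_{i_0}$ whose value set dominates every $y_i$ occurring at $x$. Convexity of the values is precisely what makes the final averaging step survive; without it the convex combination could leave $F(z_{i_0})+\tfrac{\varepsilon}{2}B$ and the graph estimate would break down.
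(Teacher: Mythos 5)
The paper does not prove this statement at all---it is quoted as a classical result and attributed to Aubin--Cellina and Aubin--Frankowska. Your argument is correct and is essentially the standard proof of Cellina's approximate selection theorem found in those references: upper semi-continuity supplies the radii $\delta(z)$, a locally Lipschitz partition of unity subordinate to a locally finite refinement of the balls $B(z,\delta(z)/2)$ glues the chosen values $y_i\in F(z_i)$, the ``largest $\delta$'' trick places all relevant $y_i$ in $F(z_{i_0})+\tfrac{\varepsilon}{2}B$, and convexity keeps the averaged value there, giving the graph inclusion.
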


\begin{remark}
Due to the symmetric interpretation of a set-valued map as a graph (see e.g. \cite{aub1}), we shall say that a set-valued map satisfies a property if and only if its graph satisfies it. For instance, a set-valued map is said to be convex if and only if its graph is a convex set.
\end{remark}

For the set-valued function $F(x)=2-3Sign(x)$, a sketch of a selection is plotted in Fig. \ref{fig2} c.

\section{Continuous approximation of $f$ }\label{switch}

\noindent Applying the Filippov regularization to (\ref{ivp}) leads to the following FDI
\begin{equation}
\label{IVP1}
D_*^qx\in F(x):=g(x)+A(x)S(x),
\end{equation}

\noindent with
\begin{equation}\label{s}
S(x)=(S_1(x_1),S_2(x_2),...,S_n(x_n))^T,
\end{equation}

\noindent where $S_i:\mathbb{R}\rightarrow \mathbb{R}$ are the set-valued variants of $s_i$, $i=1,2,...,n$ ($Sgn(x_i)$ for the usual case of $sgn(x_i)$).

The existence of solutions for FDIs is analyzed in, e.g., \cite{el}. However, because of the lack of numerical methods to find the solutions to fractional-order differential inclusions, necessary to determine LEs, we can convert the set-valued IVP to a single-valued one of fractional order.

It is easy to check that $F$, defined in (\ref{IVP1}), verifies the requirements of Theorem \ref{th0} (see \cite{dand}). Therefore, we can enounce the following theorem:
\begin{theorem}\label{th1}\emph{\cite{danda}}
The PWC IVP of fractional order (\ref{ivp}) can be transformed into the following continuous IVP of fractional order,
\begin{equation}\label{glo}
D_*^qx={f}_\varepsilon(x):=g(x)+A(x){s}_\varepsilon(x),
\end{equation}

\noindent where ${s}_{\varepsilon}(x)=({s}_{1\varepsilon}(x_1),{s}_{2\varepsilon}(x_2),...,{s}_{n\varepsilon}(x_n))^T$ is the continuous approximation of $S(x)=(S_1(x_1),S_2(x_2),...,S_n(x_n))^T$.
\end{theorem}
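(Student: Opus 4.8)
The plan is to verify that the Filippov set-valued map $F$ defined in~(\ref{IVP1}) satisfies the hypotheses of Cellina's Theorem (Theorem~\ref{th0}), and then to apply that theorem componentwise to obtain the continuous selection $f_\varepsilon$. First I would establish that $F(x)=g(x)+A(x)S(x)$ is upper semi-continuous with nonempty, closed, convex values. Since $g$ is single-valued and continuous (indeed differentiable by~(\textbf{H2})), the only multi-valued contribution comes from the $S_i(x_i)$. For the prototypical case $s_i=sign$, we have $S_i=Sign$, whose values are the singletons $\{-1\},\{+1\}$ off the discontinuity set and the interval $[-1,1]$ at $x_i=0$; these are manifestly nonempty, compact and convex, and $Sign$ is USC because its graph is closed. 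Multiplication by the (continuous) matrix $A(x)$ and addition of the continuous term $g(x)$ preserve upper semi-continuity and convexity of the values, so $F$ inherits these properties. This reduces the theorem to a direct invocation of Theorem~\ref{th0}.

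Next I would apply Cellina's Theorem: for every $\varepsilon>0$ there exists a locally Lipschitz selection $f_\varepsilon$ of $F$ whose graph lies within the $\varepsilon$-neighborhood of $\mathrm{Graph}(F)$. The point is then to exhibit this selection in the structured form~(\ref{glo}), namely $f_\varepsilon(x)=g(x)+A(x)s_\varepsilon(x)$. Because the set-valued part enters only through the diagonal block $S(x)=(S_1(x_1),\dots,S_n(x_n))^T$, and each $S_i$ depends on a single scalar variable $x_i$, I would apply the selection result to each scalar map $S_i:\mathbb{R}\rightrightarrows\mathbb{R}$ separately, obtaining continuous (locally Lipschitz) approximations $s_{i\varepsilon}$ with $s_{i\varepsilon}(x_i)\in S_i(x_i)$ up to the $\varepsilon$-graph tolerance. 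Assembling these into $s_\varepsilon=(s_{1\varepsilon},\dots,s_{n\varepsilon})^T$ and leaving the smooth terms $g$ and $A$ untouched yields precisely the claimed continuous right-hand side $f_\varepsilon$, and the resulting IVP~(\ref{glo}) is a single-valued continuous FDE amenable to standard numerical schemes such as the ABM method.

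The main obstacle is the passage from the abstract existence statement of Cellina's Theorem to the \emph{structured} selection appearing in~(\ref{glo}): Cellina guarantees some locally Lipschitz selection $f_\varepsilon$, but not a priori one that preserves the additive/multiplicative decomposition $g+A s_\varepsilon$. The key observation resolving this is that the multi-valued character is confined to the diagonal map $S$, so it suffices to approximate each scalar $Sign$ (or more generally each piecewise-constant $S_i$) individually, and every such scalar selection automatically has the required form. I would therefore emphasize that the theorem reduces to scalar graph approximations of the $S_i$, each of which can be realized concretely, e.g.\ by a steep but continuous ramp of slope $1/\varepsilon$ across the discontinuity, whose graph lies within $\varepsilon B$ of $\mathrm{Graph}(Sign)$. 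A secondary technical point to address is that $s_\varepsilon$ being only locally Lipschitz (rather than differentiable) is already sufficient for the continuity required here, while the differentiability needed for the Lyapunov variational equations is deferred to the subsequent section, where one may further smooth $s_\varepsilon$ if desired.
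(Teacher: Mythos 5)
Your proposal is correct and follows essentially the same route as the paper: Filippov regularization of (\ref{ivp}) into the FDI (\ref{IVP1}), verification that $F$ satisfies the hypotheses of Cellina's Theorem (USC with nonempty convex values), and then a componentwise selection that preserves the structure $g+A s_\varepsilon$ by approximating only the scalar maps $S_i$. In fact you supply more detail than the paper itself, which delegates the verification to \cite{dand} and the constructive selection step to the sketch in Fig.~\ref{fig5} and to \cite{danda}; your explicit treatment of why the selection can be taken in the structured form (the multivalued part being confined to the diagonal map $S$) is the point the paper leaves implicit.
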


The steps of the proof can be viewed in the sketch drawn in Fig. \ref{fig5}. As can be seen, the constructive proof allows to choose the approximation simply by replacing $s$ with the continuous function $s_\varepsilon$ (grey line in Fig. \ref{fig5}).

The existence of (Lipschitz) continuous approximations for the set-valued function $F$ defined in (\ref{IVP1}) is studied in \cite{danx}.

The approximation of $f$ defined in (\ref{ivp}) can be done as closely as desired, and can be made locally (in small neighborhoods of discontinuity points of $s$), or globally (in small neighborhoods of the graph of $S$; see Fig. \ref{fig2} c) \cite{danda}.

\noindent Generally, a set-valued function admits (infinitely) many local or global approximations.

In this paper, we use global approximations, which are easy to implement numerically.

For the sake of simplicity, for each component $s_{i \varepsilon}(x_i)$, $i=1,2,...,n$, $\varepsilon$ can have the same value.

Since most of practical examples of PWC systems are modeled via $sign$ function, we shall use for its global approximation, the so-called \emph{sigmoid functions}, ${sign_\varepsilon}$, because this class of functions provide the required flexibility and to which the abruptness of the discontinuity can be easily modified\footnote{The class of sigmoid functions includes many other examples such as the hyperbolic tangent, the error function, the logistic function, algebraic functions like $\frac{x}{\sqrt{\delta+x^2}}$, $\frac{2}{1+e^{-\frac{x}{\delta}}}-1$ \cite{danda} and so on.}
\begin{equation}\label{h_simplu}
{sign}_{\varepsilon(\delta)}(x)=\frac{2}{\pi}arctan\frac{x}{\delta}\approx Sign(x).
\end{equation}

\noindent In (\ref{h_simplu}), $\delta$ is a positive parameter which controls the slope in the $\varepsilon$-neighborhood of the discontinuity $x=0$ (In Fig. \ref{fig6} a the graph is plotted for a large value of $\delta$, $\delta=1E-1$, for a clear image).

For global approximations, $\varepsilon$ is determined implicitly, depending proportionally on $\delta$ size, $\varepsilon=\varepsilon(\delta)$. For example, in order to obtain a neighborhood $\varepsilon$ of order of $1E-6$, we need to choose for $\delta$ the value $1E-4$ (in \cite{danda}, a detailed numerical analysis for the case of the sigmoid function $sign_\varepsilon(x)=2/(1+e^{-x/\delta})-1$ is presented).

The function in example (\ref{exemplu}) becomes
\begin{equation}
f(x)\approx f_{\varepsilon(\delta)}(x)=2-3{sgn}_{\varepsilon(\delta)}(x)=2-\frac{6}{\pi}arctan\frac{x}{\delta}.
\end{equation}

\section{Lyapunov exponents}

As for the case of integer-order systems, in order to determine the Lyapunov spectrum, we show next that it is possible to find a differential (variational) equation of fractional order.

Let us consider again the IVP (\ref{ivp}). The following theorem holds

\begin{theorem} System (\ref{ivp}) has the following variational equations which define the LEs
\begin{equation}\label{vari}
\begin{array}{l}
D_{\ast }^{q}\Phi (t)=D_{x}f_\varepsilon(x)\Phi (t), \\
\Phi (0)=I_{n}.%
\end{array}%
\end{equation}

\end{theorem}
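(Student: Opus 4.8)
The plan is to reduce everything to the \emph{continuous} approximating system and then apply the classical linearization argument, since the obstruction in the original problem is precisely that the discontinuous right-hand side $f$ has no Jacobian on the discontinuity set $\mathcal{M}$. By Theorem~\ref{th1}, the PWC IVP (\ref{ivp}) is replaced by the continuous IVP (\ref{glo}), $D_*^q x = f_\varepsilon(x)$ with $f_\varepsilon = g + A\,s_\varepsilon$. First I would observe that $f_\varepsilon$ is differentiable on all of $\mathbb{R}^n$: $g$ is differentiable by Assumption (H2), the entries of $A$ are differentiable functions, and each approximant $s_{i\varepsilon}$ (for instance the sigmoid (\ref{h_simplu})) is smooth. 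Consequently the Jacobian $D_x f_\varepsilon(x)$ exists and is continuous everywhere, in particular across the former discontinuity set $\mathcal{M}$, so that the matrix appearing in (\ref{vari}) is well defined.

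Second, I would introduce the sensitivity (fundamental) matrix $\Phi(t) = \partial x(t;x_0)/\partial x_0$, the derivative of the solution of (\ref{glo}) with respect to its initial datum. Because $f_\varepsilon$ is locally Lipschitz and $C^1$, the fractional IVP (\ref{glo}) admits on $[0,T]$ a unique solution whose dependence on $x_0$ is $C^1$ (existence, uniqueness and smooth dependence for Caputo FDEs with $C^1$ right-hand side being standard). I would then differentiate both sides of $D_*^q x = f_\varepsilon(x)$ with respect to $x_0$. The right-hand side yields, by the chain rule, $D_x f_\varepsilon(x)\,\partial x/\partial x_0 = D_x f_\varepsilon(x)\,\Phi(t)$, while the initial condition differentiates to $\Phi(0)=\partial x_0/\partial x_0 = I_n$.

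The crux, and the step I expect to be the main obstacle, is interchanging the nonlocal Caputo operator $D_*^q$ with the differentiation $\partial/\partial x_0$, so as to identify $\partial(D_*^q x)/\partial x_0$ with $D_*^q\Phi$. Since $D_*^q$ is the linear convolution operator (\ref{cap}), this amounts to differentiating under the integral sign in the Caputo integral. I would justify it by invoking the $C^1$ dependence of $x(t;x_0)$ on $x_0$ together with uniform bounds on $x$ and on $\partial x/\partial x_0$ over the compact interval $[0,T]$, which supply an integrable dominating function and permit a dominated-convergence argument. Granting this commutation, the two computations combine into the variational equation $D_*^q\Phi(t)=D_x f_\varepsilon(x)\Phi(t)$, $\Phi(0)=I_n$, which is precisely (\ref{vari}).

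Finally, I would note that, $D_x f_\varepsilon$ being continuous and bounded along the (bounded) reference trajectory, the linear fractional system (\ref{vari}) is itself well posed and admits a unique matrix solution $\Phi(t)$ on $[0,T]$. The Lyapunov exponents are then recovered as the exponential growth rates of the singular values of $\Phi(t)$, i.e. $\lambda_k=\lim_{t\to\infty}\frac{1}{t}\ln\sigma_k(\Phi(t))$, which is exactly the quantity computed numerically by Wolf's algorithm. Thus the LEs of (\ref{ivp}), understood through its continuous approximation (\ref{glo}), are well defined.
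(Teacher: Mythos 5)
Your proposal is correct and follows the same overall strategy as the paper: first pass from the discontinuous IVP (\ref{ivp}) to the continuous approximation (\ref{glo}) via Theorem \ref{th1}, so that the Jacobian $D_xf_\varepsilon$ exists everywhere, and then linearize along the flow. The difference lies in how the second step is handled. The paper treats it as a citation: it invokes the existence of the flow of (\ref{glo}) from \cite{kai0} and then appeals to Theorem 2 of \cite{li} to assert that the variational equation (\ref{vari}) exists, without deriving it. You instead reconstruct that cited result from first principles: you differentiate the identity $D_*^qx(t;x_0)=f_\varepsilon(x(t;x_0))$ with respect to $x_0$, identify $\Phi(t)=\partial x(t;x_0)/\partial x_0$, and correctly isolate the genuine technical crux --- interchanging the nonlocal convolution operator (\ref{cap}) with $\partial/\partial x_0$ --- proposing a dominated-convergence justification based on $C^1$ dependence on initial data and uniform bounds on $[0,T]$. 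That interchange is exactly what the reference \cite{li} supplies, so your argument is a self-contained (if still sketched) substitute for the black-box citation; what the paper's route buys is brevity, while yours makes visible where the analytic work actually sits. One minor discrepancy: you define the LEs through the singular values of $\Phi(t)$, whereas the paper's equation (\ref{lamb}) uses the eigenvalues $\Lambda_k(t)$ of $\Phi(t)$; this does not affect the theorem itself, which only asserts the variational system (\ref{vari}).
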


\begin{proof}

By Theorem \ref{th1}, $f$ is approximated by $f_\varepsilon$. Under the considered assumptions, there exists a flow \cite{kai0} $\phi:I\times \mathbb{R}^n\rightarrow \mathbb{R}^n$ of the approximated system system (\ref{glo}), $x=\phi(t,x)$, which satisfy $ D_*^q \phi(t,x_0)=f_\varepsilon(\phi(t,x_0))$, $\phi(0,x_0)=x_0$, for all $t\in I$. Next, Theorem 2 in \cite{li} ensures the existence of the variational equation (\ref{vari}), where $D_{x}f_\varepsilon(x)$ is the matrix of the partial derivatives of $f_\varepsilon$ evaluated along the solution $x=\phi(t,x)$\footnote{Because while the problem is solved in parallel with (\ref{glo}) the initial conditions change, the usual notation $x_0$ is replaced with $x$.}, $\Phi(t)=D_x\phi(t,x)$ is the Jacobian of the flow $\phi$ evaluated at $\phi(t,x)$ and $I_n$ is the identity matrix.
\end{proof}

The parallel solving of systems (\ref{glo}) and (\ref{vari}), which is a linear matrix-valued and time-varying system with coefficients depending on the evolution of the original system (\ref{IVP1}), allows us to determine LEs.

\begin{remark}
This result can be extended for Cauchy problems involving Riemann-Liouville derivative \cite{li}. However, due to the benefit of using initial conditions as for integer order, we restrict the study to Caputo's derivative.
\end{remark}

Oseledec's Multiplicative Ergodic Theorem \cite{ose}, which ensures the existence of LEs from the stability matrix of the system, still applies to fractional-oredr systems, and proves that the entire spectrum of LEs can be determined and the following limit exists and is finite
\begin{equation}\label{lamb}
\lambda_k=\lim _{t\rightarrow \infty}\frac{1}{t}ln|\Lambda_k(t)|, ~~~k=1,2,...,n.
\end{equation}

\noindent In (\ref{lamb}), $\Lambda_k(t)$ are the eigenvalues of $\Phi(t)$ and, following the common order convention, the ordered LEs: $\lambda_1\geq \lambda_2\geq \cdots\geq \lambda_n$ measure the rate of growth of infinitesimal $k$-dimensional volumes, $k=1,...,n$.

Because the analytic evaluation is rarely available, usually the LEs are computationally estimated.

Therefore, by replacing in the known algorithms for LEs spectrum or for the largest LE, the numerical method for ODEs of integer order (usually the 4th RK method) with some methods for FDEs (ABM method in this paper), one can determine the LEs spectrum, the largest LE or their bounds (see e.g. \cite{li} or \cite{wei} for LEs spectrum and largest LE).

\section{Applications}

In this section, the LE spectrum is determined with the Matlab code \emph{lyapunov.m} \cite{lyap}, which adapts Wolf's algorithm  \cite{wol} for continuous systems of integer order. Thus, in Wolf's algorithm, the ODEs integrator for continuous flows (usually the classical Runge–Kutta fourth-order method) is replaced with some numerical schemes for FDEs.\footnote{ In this paper, the Matlab code \emph{fde12.m} \cite{code} (an implementation of the predictor-corrector PECE method of Adams-Bashforth-Moulton type presented in \cite{kai1}, which has been suitably modified for the general case with the incommensurate case) has been utilized.} The integration step size is $h=0.001$ and the dynamics of the LE spectrum have been determined for $t\in[0,400]$, while the chaotic attractors have been drawn for $t\in[0,1000]$.

Let us consider the usual case of the PWC $sign$ function. In order to integrate the variational equation (\ref{vari}), we need to calculate the derivative of the sigmoid function $sign_\varepsilon$
\begin{equation}\label{derivata}
sign'_{\varepsilon(\delta)}(x)=\frac{d}{dx}sign_{\varepsilon(\delta)}(x)=\frac{2}{\pi}\frac{\delta}{\delta^2+x^2}.
\end{equation}

At $x=0$, due to the vertical segment, the set-valued function $Sign$ has the slope $+\infty$, while any approximation $f_{\varepsilon(\delta)}$ of $Sign$ has a finite slope given by the derivative at $x=0$, as

\begin{equation*}
m=sign'_{\varepsilon(\delta)}(0)=\frac{2}{\pi\delta}.
\end{equation*}

The graph of $sign_{\varepsilon(\delta)}$ and $sign'_{\varepsilon(\delta)}$, chosen for clarity $\delta=1E-1$, are plotted in Fig. \ref{fig5}.

In this paper, we consider $\delta=5E-4$. For this value, $\varepsilon$ was of order of $1E-6$, and $m=4000/\pi$. Lower values for $\delta$ imply higher values for the derivative of $sign_\varepsilon$ at $x=0$. For example, for $\delta=1E-5$, $m=2E5/pi$ which, considering the roundoff errors and also the errors of ABM method \cite{kai1,kai2}, it can finallz lead to a loss of precision.

\begin{itemize}
\item
The Sprott system (\ref{spr}), for $q=(0.99,0.98,0.97)$ and the chosen parameters $a=1$ and $b=0.5$, behaves chaotically (Fig. \ref{fig7}) and has the approximated form
\begin{equation}
\begin{array}{l}
D_{\ast }^{0.99}x_{1}=x_{2}, \\
D_{\ast }^{0.98}x_{2}=x_{3}, \\
D_{\ast }^{0.97}x_{3}=-x_{1}-x_{2}-0.5x_{3}+sign_{\varepsilon(\delta)}(x_{1}),%
\end{array}%
\end{equation}

\noindent with $sign_{\varepsilon(\delta)}(x_1)$ given by (\ref{h_simplu}), and the Jacobian
\[
D_{x}f_{\varepsilon(\delta)}(x)=\left(
\begin{array}{ccc}
0 & 1 & 0 \\
0 & 0 & 1 \\
-1+sign'_{\varepsilon(\delta)}(x_1) & -1 & -0.5%
\end{array}%
\right),
\]

\noindent where $sign'_{\varepsilon(\delta)}$ is given by (\ref{derivata}).

The LE spectrum, obtained after running the code on $I=[0,400]$ (see the dynamics of LEs in Fig.\ref{fig8}), is $\{ 0.115,  -0.006,  -0.730\}$.

\item Let us next consider the fractional+order variant of the piece-wise linear (PWL) Chen's system \cite{aziz}
\begin{equation}
\begin{array}{l}\label{ch}
D_{\ast }^{q_1}x_{1}=a\left( x_{2}-x_{1}\right),  \\
D_{\ast }^{q_2}x_{2}=sign(x_{1})\left(c-a-x_{3}\right) +cdx_{2}, \\
D_{\ast }^{q_3}x_{3}=x_1sign(x_{2})-bx_{3},
\end{array}
\end{equation}

\noindent with coefficients $ a = 1.18,~ b = 0.16, ~c = 1.2, ~d = 0.1$, where

\[
g(x)=\left(
\begin{array}{c}
a(x_2-x_1) \\
cdx_2 \\
-bx_3%
\end{array}%
\right) ,~~~A(x)=\left(
\begin{array}{ccc}
0 & 0 & 0 \\
c-a-x_3 & 0 & 0 \\
0 & x_1 & 0%
\end{array}%
\right).
\]

For $q=(0.99,0.9,0.999)$, and for the considered coefficients values, the system behaves chaotically (Fig.\ref{fig9}), and the approximated form is
\begin{equation}
\begin{array}{l}
D_{\ast }^{0.99}x_{1}=1.18\left( x_{2}-x_{1}\right),  \\
D_{\ast }^{0.9}x_{2}=sign_{\varepsilon(\delta)}(x_{1})(0.82-x_{3}) +0.12x_{2}, \\
D_{\ast }^{0.999}x_{3}=x_1sign_{\varepsilon(\delta)}(x_{2})-0.16x_{3},
\end{array}
\end{equation}

\noindent which has the Jacobian

\[
\begin{array}{l}
D_{x}f_{\varepsilon (\delta )}(x)= \\\\
\left(
\begin{array}{ccc}
-1.18 & 1.18 & 0 \\
0.82sign_{\varepsilon (\delta )}^{\prime }(x_{1}) & 0.12 &
-sign_{\varepsilon (\delta )}(x_{1}) \\
sign_{\varepsilon (\delta )}^{{}}(x_{2}) & x_{1}sign_{\varepsilon (\delta
)}^{\prime }(x_{2}) & -0.16%
\end{array}%
\right).
\end{array}%
\]

\noindent The LEs are: $\{0.719, 0.289, -2.191\}$ (see Fig. \ref{fig10}).

Because there are two positive exponents, the PWC Chen's system of fractional order is hyperchaotic.

\item The last considered example is the fractional+oredr variant of the PWC Shimizu--Morioka's system \cite{shimi,shimi1} with $q=(0.99,0.97,0.98)$
\begin{equation}\label{simiz}
\begin{array}{l}
D_{\ast }^{0.99}x_{1}=x_{2}, \\
D_{\ast }^{0.97} x_{2}=(1-x_3)sign(x_{1})-a x_2 , \\
D_{\ast }^{0.98} x_{3}=x_1^2-b x_3,%
\end{array}%
\end{equation}

\noindent which with $a=0.75$ and $b=0.45$ behaves chaotically (Fig. \ref{fig11}). Here $g(x)=(x_2,-0.75x_2,x_1^2-0.45x_3)^T$ and

\[
A(x)=\left(
\begin{array}{ccc}
0 & 0 & 0 \\
1-x_{3} & 0 & 0 \\
0 & 0 & 0%
\end{array}%
\right).
\]
The approximated system is
\begin{equation}\label{simiz2}
\begin{array}{l}
D_{\ast }^{0.99}x_{1}=x_{2}, \\
D_{\ast }^{0.97} x_{2}=(1-x_3)sign_{\varepsilon(\delta)}(x_1)-0.75 x_2 , \\
D_{\ast }^{0.98} x_{3}=x_1^2-0.45 x_3,%
\end{array}%
\end{equation}

\noindent and the Jacobian has the following form
\[
\begin{array}{l}
D_{x}f_{\varepsilon (\delta )}(x)= \\ \\
\left(
\begin{array}{ccc}
0 & 1 & 0 \\
(1-x_3)sign'_{\varepsilon(\delta)}(x_1) & -0.75 & -sign_{\varepsilon(\delta)}(x_1)\\%
-sign_{\varepsilon (\delta )}(x_{1}) \\
2x_1 & 0 & -0.45\\
\end{array}%
\right).
\end{array}%
\]

\noindent The LEs are: $\{0.236, 0.137,  -1.828\}$ (see Fig. \ref{fig12}).

Similar to Chen's system, Shimizu--Morioka's system is hyperchaotic due to the presence of two positive LEs.

\end{itemize}

\section{Conclusion}
In this paper, we have shown that PWC systems of fractional order have well-defined LEs. To prove that the PWC systems can be continuously approximated, Cellina's Theorem and some results from the theory of differential equations with discontinuous right-hand side have been utilized.

The approximation of the discontinuous elements on the right-hand side of the IVPs, which generally are $sign$ functions, has been realized with the sigmoid function $2/\pi atan(x/\delta)$, with $\delta$ being a parameter which determines the slope of approximation in small neighborhoods of the discontinuity point $x=0$.

The variational equations which define the LEs help to find numerically the LEs. For this purpose, we used Matlab implementations of the known Wolf algorithm in which the numerical method for ODEs has been replaced with the predictor-corrector ABM method for fractional-order differential equations.

A future task related to this subject would be the numerical analysis of the computational errors given by any algorithm for numerical determination of LEs and also given by the numerical methods for FDEs. As is well known, the errors in these cases are quite large and, therefore, an optimal integration step size correlated to the maximal integration interval would be of importance.

\newpage

\begin{figure*}
\begin{center}
\includegraphics[scale=0.75] {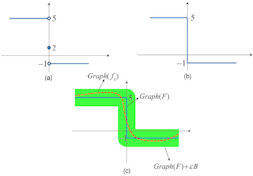}
\caption{Graph of the function $f(x)=2-3sign(x)$. a) Discontinuity at $x=0$. b) Graph of the set-valued version of $f$. c) Sketch of a continuous selection of the set-valued function $F$ within an $\varepsilon$-neighborhood of $F$.}
\label{fig2}       
\end{center}
\end{figure*}

\begin{figure*}
\begin{center}
\includegraphics[scale=0.7] {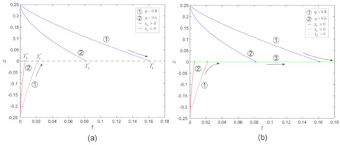}
\caption{Solutions of the equation $D_*^qx=2-3sign(x)$ for $q=0.8$ and $q=0.6$. a) Depending on the sign of $x_0$, the solutions tend to the line $x=0$ but they cannot reach this line; there is no a classical solution. b) Considering the differential inclusion $D_*^qx-2\in-3Sign(x)$, there exists a generalized solution which continues through the line $x=0$ for whatever $x_0$.}
\label{fig3}       
\end{center}
\end{figure*}

\begin{figure*}
\begin{center}
\includegraphics[scale=0.75] {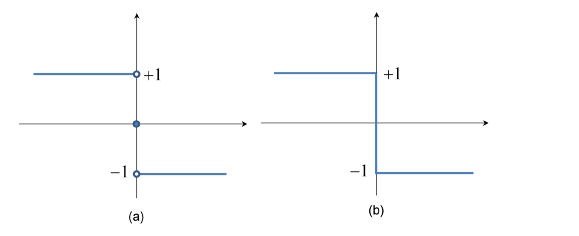}
\caption{a) Graph of $sign(x)$. b) Graph of $Sign(x)$.}
\label{fig4}       
\end{center}
\end{figure*}

\begin{figure*}
\begin{center}
\includegraphics[scale=0.5] {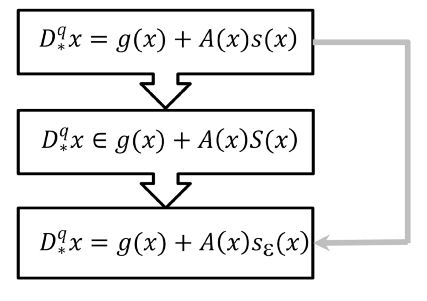}
\caption{The sketch of the proof steps of Theorem \ref{th1}.}
\label{fig5}       
\end{center}
\end{figure*}

\begin{figure*}
\begin{center}
\includegraphics[scale=0.75] {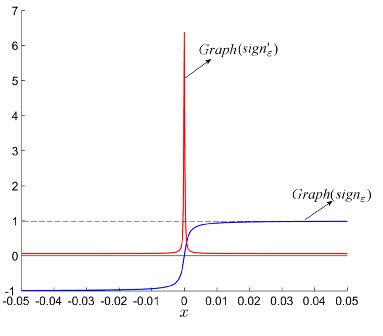}
\caption{Graphs of $sign_{\varepsilon(1e-1)}$ (blue plot) and $\frac{d}{dx}sign_{\varepsilon(1e-1)}$ (red plot).}
\label{fig6}       
\end{center}
\end{figure*}

\begin{figure*}
\begin{center}
\includegraphics[scale=0.75] {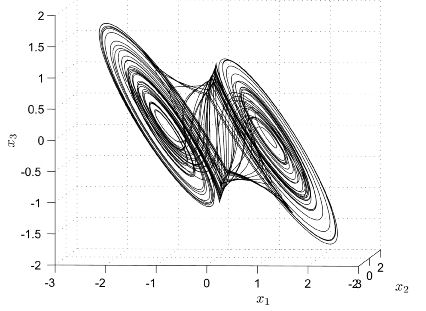}
\caption{Sprott's chaotic attractor.}
\label{fig7}       
\end{center}
\end{figure*}

\begin{figure*}
\begin{center}
\includegraphics[scale=0.75] {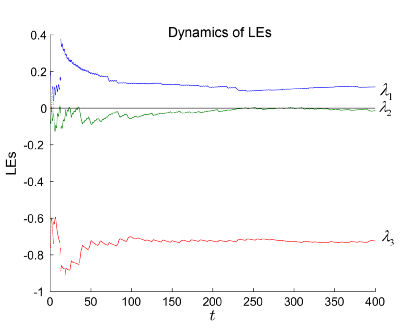}
\caption{Dynamics of the LEs exponents for the Sprott system.}
\label{fig8}       
\end{center}
\end{figure*}

\begin{figure*}
\begin{center}
\includegraphics[scale=0.75] {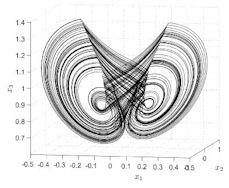}
\caption{Chen's chaotic attractor.}
\label{fig9}       
\end{center}
\end{figure*}

\begin{figure*}
\begin{center}
\includegraphics[scale=0.75] {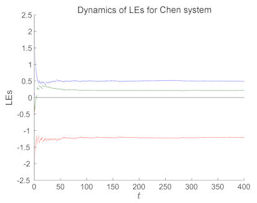}
\caption{Dynamics of the LEs exponents for the Chen system.}
\label{fig10}       
\end{center}
\end{figure*}

\begin{figure*}
\begin{center}
\includegraphics[scale=0.75] {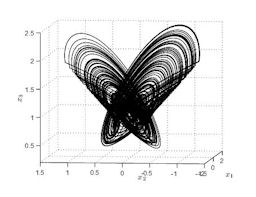}
\caption{Shimizu--Morioka's chaotic attractor.}
\label{fig11}       
\end{center}
\end{figure*}

\begin{figure*}
\begin{center}
\includegraphics[scale=0.75] {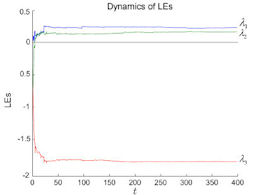}
\caption{Dynamics of the LEs exponents for the Shimizu--Morioka system.}
\label{fig12}       
\end{center}
\end{figure*}


\begin{thebibliography}{00}


\bibitem{kun} Kunze, M.: Rigorous Methods and Numerical Results for Dry Friction Problems, Applied Nonlinear Dynamics and Chaos of Mechanical Systems with Discontinuities, World Scientific Series on Nonlinear Science Series A: Volume 28, Editors Wiercigroch, M. and de Kraker, B, Singapore, World Scientific (2000)

\bibitem{gran} Grantham, W.J. , Lee, B.: A chaotic limit cycle paradox. Dynam. Contr. 3, 19--173 (1993)

\bibitem{gan}Gans, R.F.: When is cutting chaotic? J. Sound Vib. 188, 75--83 (1995)


\bibitem{li}Li, C., Gong, Z., Qian, D., Chen, Y.Q.: On the bound of the Lyapunov exponents for the fractional differential systems. CHAOS 20, 013127 (2010)
\bibitem{wei}Zhang, W., Zhou, S., Liao, X., Mai, H., Xiao, K.: Estimate the Largest Lyapunov exponent of fractional-order systems. Communications, Circuits and Systems 2008. ICCCAS 2008 International Conference on, 25-27 May 2008, 1121 - 1124

\bibitem{capo}Caponetto, R., Fazzino, S.: A semi-analytical method for the computation of the Lyapunov exponents of fractional-order systems. Communications in Nonlinear Science and Numerical Simulation 18(1), 22-–27 (2013)

\bibitem{ngu} Cong, N.D., Son, D.T., Tuan, H.T.: On fractional lyapunov exponent for solutions of linear fractional differential equations. Fractional Calculus and Applied Analysis \textbf{17}(2), 285--306 (2014)
\bibitem{sprotus} Sprott, J.C.: A new class of chaotic circuit. Phys. Lett. A 266, 19–-23 (2000)

\bibitem{sprot2} Ahmad, W.M., Sprott, J.C.: Chaos in fractional-order autonomous nonlinear systems. Chaos Solitons \& Fractals 16, 339–-351 (2003)

\bibitem{wir} Wiercigroch, M., de Kraker, B.: Applied nonlinear dynamics and chaos of mechanical systems with discontinuities. Singapore, World Scientific (2000)
\bibitem{danda} Danca, M.-F.: Continuous approximation of a class of piece-wise continuous systems of fractional order. International Journal of Bif and Chaos, accepted (2014)

\bibitem{cort} Cortes, J.: Discontinuous dynamical systems. Control Systems, IEEE 28(3) 36--73 (2008)

\bibitem{cap}
Caputo, M.: Elasticity and Dissipation. Zanichelli, Bologna, Italy (1969)

\bibitem{old}
Oldham, K.B., Spanier, J.: The fractional calculus, theory and applications of differentiation and integration to arbitrary Order. Elsevier Science (1974)


\bibitem{pod}
Podlubny, I.: Fractional differential equations. Academic Press, San Diego (1999)

\bibitem{nico}Heymans, N., Podlubny, I.: Physical interpretation of initial conditions for fractional differential equations with Riemann-Liouville fractional derivatives. Rheologica Acta \textbf{45}(5), 765--771 (2006)

\bibitem{pod5} Podlubny, I.: Geometric and physical interpretation of fractional integration and fractional differentiation. Fractional Calculus and Applied Analysis 5(4), 367-–386 (2002)

\bibitem{filip}Filippov, A.F.: Differential Equations with Discontinuous Right-Hand Sides. Kluwer Academic, Dordrecht (1988)

\bibitem{dai}Deimling, K.: Multivalued Differential Equations. de Gruyter, Berlin-New York (1992)

\bibitem{aub1} Aubin, J.-P., Cellina, A.: Diffeerential Inclusions Set-valued Maps and Viability Theory. Springer, Berlin (1984)
\bibitem{aub2}Aubin, J.-P., Frankowska, H.: Set-valued Analysis. Birkhuser, Boston (1990)
\bibitem{zar}Zaremba, S.C.: Sur une extension de la  notion  d'\'{e}quation  diff\'{e}rentielle.  C. R.  Acad.  Sci. Paris 199 A545--A548 (1934)

\bibitem{el} El-Sayed, A.M.A., Ibrahim, A.G.: Multivalued fractional differential equations of arbitrary orders. Appl. Math. Comput. 68, 15--25 1995

\bibitem{jon} Henderson, J., Ouaha, A.: A Filippov's Theorem, Some Existence Results and the Compactness of Solution Sets of Impulsive Fractional Order Differential Inclusions. Mediterr. J. Math. \textbf{9}(3), 453--485 (2012)

\bibitem{jon2} Hendersona, J., Ouahab, A.: Fractional functional differential inclusions with finite delay. Nonlinear Analysis: Theory, Methods and Applications
70(5), 2091-–2105 (2009)

\bibitem{yong} Changa, Y.-K, Nieto, J.J.: Some new existence results for fractional differential inclusions with boundary conditions. Mathematical and Computer Modelling
49(3–-4), 605–-609 (2009)

\bibitem{waz} Wa\.{z}ewski, T.: On an  optimal  control  problem, in Differential  Equations  and  Applications, Conference Proceedings Prague 1963, 229--242 (1962)

\bibitem{rob}Garrappa, R.: On some generalizations of the implicit Euler method for discontinuous fractional differential equation. Mathematics and Computers in Simulation \textbf{95}, 213–-228 (2014)

\bibitem{cll}Cellina, A., Solimini, S.: Continuous  extensions of selections.  Bull.  Polish Acad.  Sci.  Math.  35(9)  (1978)
\bibitem{kas} Kastner-Maresch, A., Lempio, F.: Difference methods with selection strategies for differential inclusions. Numer. Funct. Anal. Optim. 14(5–6), 555–-572 (1993)

\bibitem{dand}Danca, M.-F.: On a class of discontinuous dynamical systems. Miskolc Mathematical Notes \textbf{2}(2), 103–116 (2001)

\bibitem{danx} M.-F. Danca: Approach of a class of discontinuous systems of fractional order: Existence of solutions. Int. J. Bifurcat. Chaos 21, 3273–-3276 (2011)
\bibitem{kai0}  Diethelm, K.: The Analysis of Fractional Differential Equations. vol. 2004 of Lecture Notes in Mathematics, Springer, Berlin, Germany (2010)

\bibitem{ose}Oseledec, V.I.: Multiplicative ergodic theorem: Characteristic Lyapunov exponents of dynamical systems. Trudy MMO 19, 179–-210 (in Russian) (1968)

\bibitem{lyap} Govorukhin, V.: Calculation Lyapunov Exponents for ODE. MATLAB Central File Exchange, file ID:4628 (2004)

\bibitem{wol}Wolf, A., Swift, J.B., Swinney, H.L., Vastano, J.A.: Determining Lyapunov Exponents from a Time Series. Physica D 16, 285--317 (1985)
\bibitem{code}Garrappa, R.: Predictor-corrector PECE method for fractional differential equations. MATLAB Central File Exchange, file ID: 32918 (2012)

\bibitem{kai1}
Diethelm, K., Ford, N.J., Freed, A.D.: A predictor-corrector approach for the numerical solution of fractional differential equations. Nonlinear Dyn. 29(1) 3--22 (2002)

\bibitem{kai2}
Diethelm K., Ford, N.J.: Analysis of fractional differential equations. J. Math. Anal. Appl. 265(2), 229--248 (2002)

\bibitem{aziz}Aziz-Alaoui, M.A., Chen, G.: Asymptotic analysisof a new piece-wise-linear chaotic system. Int. J. Bifurc. Chaos \textbf{12}(1),
147–-157 (2002)

\bibitem{shimi} Shimizu T., Morioka, N.: On the bifurcation of a symmetric limit cycle to an asymmetric one in a simple model. Phys. Lett. A 76, 201--204 (1980)

\bibitem{shimi1}Yu, S., Tang, W.K.S., L\"{u}, J., Chen, G.: Generation of $n×m$-Wing Lorenz-Like Attractors From a Modified Shimizu–Morioka Model. Circuits and Systems II: Express Briefs, IEEE Transactions on 55 (11), 1168–-1172 (2008)

\end{thebibliography}
\end{document}